\newtheorem{lemma}{Lemma}
\newtheorem{example}{Example}
\newtheorem{remark}{Remark}
\newtheorem{proposition}{Proposition}
\newcommand{\tr}{{\rm Tr }}
\newcommand{\vp}{\varphi}
\newcommand{\C}{\mathbb{C}}
\newcommand{\Z}{\mathbb{Z}}
\newcommand{\N}{\mathbb{N}}
\newcommand{\R}{\mathbb{R}}
\newcommand{\be}{\begin{equation}}
\newcommand{\eeq}{\end{equation}}
\newcommand{\bet}{\begin{equation*}}
\newcommand{\eeqt}{\end{equation*}}
\newcommand{\bea}{\begin{eqnarray}}
\newcommand{\eeqa}{\end{eqnarray}}
\newcommand{\beat}{\begin{eqnarray*}}
\newcommand{\eeqat}{\end{eqnarray*}}
\newcommand{\h}[1]{\mathcal{#1}}
\newcommand{\hil}{\mathcal{H}}
\newcommand{\hA}{\mathcal{A}}
\newcommand{\hB}{\mathcal{B}}
\newcommand{\br}{\mathcal{B}(\R)}
\newcommand{\lh}{L(\mathcal{H})}
\begin{document}
\title[Semispectral measures as convolutions]{Semispectral measures as convolutions and their moment operators}

\author{Jukka Kiukas}
\address{Jukka Kiukas,
Department of Physics and Astronomy, University of Turku, FI-20014 Turku,
Finland.  At present:
 Institut f\"ur Mathematische Physik, TU Braunschweig,
DE-38106 Braunschweig, Germany.} \email{jukka.kiukas@utu.fi}
\author{Pekka Lahti}
\address{Pekka Lahti,
Department of Physics and Astronomy, University of Turku, FI-20014 Turku,
Finland} \email{pekka.lahti@utu.fi}
\author{Kari Ylinen}
\address{Kari Ylinen, Department of Mathematics, University of Turku,
FI-20014 Turku, Finland} \email{kari.ylinen@utu.fi}

\begin{abstract}
The moment operators of a semispectral measure having the structure of the convolution of a 
positive measure and a semispectral measure are studied, with paying attention to the natural domains of these unbounded operators. 
The results are then applied to conveniently determine the moment operators of the Cartesian margins of the phase space observables.

\

\noindent {\bf Keywords:}  moment operators, convolution,
semispectral measure, phase space observables
\end{abstract}
\maketitle

\section{Introduction}
The increasingly accepted view of a quantum observable as a positive operator  measure 
as opposed to the more traditional approach  using only spectral measures 
has added a great deal to our understanding of the mathematical structure and foundational aspects of
quantum mechanics.
%
In many cases  an observable that is not itself projection valued, 
nevertheless arises as an unsharp or smeared version of a spectral measure. One way to realize such a smearing is 
to convolve the spectral measure with a probability measure. 
In particular, the marginal observables of a phase space observable have such a structure.

Phase space observables have several important applications in
quantum mechanics, ranging from the theory of Husimi distributions
in quantum optics  to state tomography,  phase space
quantizations, and to the theory of approximate joint measurements
of position and momentum, as highlighted, for instance,   by the
monographs
\cite{Davies76,Holevo82,OQP95,Schroeck96,Leonhardt97,Stulpe97,Landsman98,Ali00,Holevo01,Paris04,Landsman07}.
Optical implementations of such  observables are also well
understood, as described e.g. in a recent study \cite{JukkaVII},
and their mathematical structure has been investigated with great
detail \cite{Holevo79,Werner84,CDeV03,JukkaII06}. 
 
The moments of measurement statistics of an observable are related to the moment operators of that
observable in the same way as the outcome probabilities are related to the observable itself. In some cases, the moments may even carry the entire information on the observable \cite{Dvurecenskij2000,JukkaVI}. Hence, it makes sense to study the moment operators of a semispectral measure obtained as a convolution, which is the aim of this paper. In Sect. \ref{preliminaries}, we give the technical lemmas needed for the main results. In particular, we discuss the difficulties in the integrability questions associated with convolutions of nonpositive scalar measures. In Sect.~\ref{moments}, we consider the case of a general convolved semispectral measure by using the operator integral of \cite{LMY}, and in Sect.~\ref{phase_space}, we work out the Cartesian marginal moment operators for a class of phase space observables.

\section{Preliminaries}\label{preliminaries}

To begin with, we recall the notion of the convolution of scalar
measures, and we prove a lemma on their moment integrals.


The \emph{convolution} of two complex Borel measures $\mu,\nu:\h
B(\R)\to \C$ is the measure $\mu*\nu:\h B(\R)\to \C$, defined by
$$
\mu*\nu(X)=(\mu\times \nu)(\{(x,y)\mid x+y\in X\}),  \ \ X\in \h
B(\R),
$$
where $\mu\times \nu$ is the product measure defined on $\h
B(\R^2)$, the Borel $\sigma$-algebra of $\R^2$ (see e.g. \cite[p.
648, Definition 8]{Dunford}).

\begin{lemma}\label{integrallemma} Let $\mu,\nu:\h B(\R)\to \C$ be two complex measures, and let
$k\in \N$.
\begin{itemize}
\item[(a)] A Borel function $f:\R\to \C$ is $\mu*\nu$-integrable
if $(x,y)\mapsto f(x+y)$ is integrable with respect to the product
measure $\mu\times \nu:\h B(\R^2)\to \C$. In that case,
$$
\int f(x)\, d(\mu*\nu)(x) = \int f(x+y)\,d(\mu\times \nu)(x,y).
$$
\item[(b)] The function $(x,y)\mapsto (x+y)^k$ is $\mu\times \nu$-integrable,
if and only if $x\mapsto x^k$ is both $\mu$- and $\nu$-integrable.
In that case,
$$
\int (x+y)^k d(\mu\times \nu)(x,y) = \sum_{n=0}^k \binom{k}{n}
\left(\int x^{k-n}d\mu(x)\right) \left(\int y^nd\nu(y)\right).
$$
\end{itemize}
\end{lemma}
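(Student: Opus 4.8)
The plan is to treat the two parts separately, in each case reducing the assertion to the corresponding statement for \emph{positive} measures by means of the Jordan decomposition and the identity $|\mu\times\nu|=|\mu|\times|\nu|$ for total variations.

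For part (a), the key observation is that $\mu*\nu$ is exactly the image (push-forward) measure of $\mu\times\nu$ under the continuous addition map $s:\R^2\to\R$, $s(x,y)=x+y$; indeed the defining formula reads $\mu*\nu(X)=(\mu\times\nu)(s^{-1}(X))$. First I would record the change-of-variables formula for image measures in the positive case, namely $\int g\,d(s_*\lambda)=\int (g\circ s)\,d\lambda$ whenever $\lambda\ge 0$ and $g\circ s$ is $\lambda$-integrable, and then extend it to the complex measure $\mu\times\nu$ by splitting it into the four positive measures arising from its real and imaginary, positive and negative parts. The one-directional nature of the statement (``if'', not ``iff'') is explained by the total variation estimate $|\mu*\nu|=|s_*(\mu\times\nu)|\le s_*|\mu\times\nu|$: integrability of $f\circ s$ with respect to $|\mu\times\nu|$ forces $\int|f|\,d(s_*|\mu\times\nu|)=\int|f\circ s|\,d|\mu\times\nu|<\infty$ by the positive change-of-variables formula, and the estimate then yields $\int|f|\,d|\mu*\nu|<\infty$, i.e. $\mu*\nu$-integrability of $f$. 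Once integrability is secured, the stated equality of integrals is just the complex change-of-variables formula applied to $f$.

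For part (b) I would work throughout with $|\mu\times\nu|=|\mu|\times|\nu|$, so that $\mu\times\nu$-integrability of $(x,y)\mapsto(x+y)^k$ means $\int|x+y|^k\,d(|\mu|\times|\nu|)<\infty$, and I would repeatedly use that complex measures are finite, whence $|\mu|(\R),|\nu|(\R)<\infty$ and $\mu$-integrability of $x\mapsto x^k$ already entails that of $x\mapsto x^j$ for every $0\le j\le k$ (since $|x|^j\le 1+|x|^k$). For the ``if'' direction, the binomial/triangle bound $|x+y|^k\le(|x|+|y|)^k=\sum_{n=0}^k\binom{k}{n}|x|^{k-n}|y|^n$ together with Tonelli's theorem splits the double integral into a finite sum of products $\int|x|^{k-n}\,d|\mu|\cdot\int|y|^n\,d|\nu|$, each finite by the preceding remark. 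For the ``only if'' direction --- which I expect to be the main obstacle --- I would run Tonelli in the other direction: finiteness of $\int|x+y|^k\,d(|\mu|\times|\nu|)$ forces the inner integral $\int|x_0+y|^k\,d|\nu|(y)$ to be finite for $|\mu|$-almost every $x_0$, hence (since $|\mu|\ne 0$) for at least one $x_0$; the reverse estimate $|y|^k\le\sum_{n=0}^k\binom{k}{n}|x_0|^{\,k-n}|x_0+y|^n$, combined again with the finiteness of $|\nu|$, then gives $\int|y|^k\,d|\nu|<\infty$, and symmetrically for $\mu$.

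Finally, for the closed formula I would invoke the binomial expansion $(x+y)^k=\sum_{n=0}^k\binom{k}{n}x^{k-n}y^n$ and Fubini's theorem for complex measures, legitimate precisely because integrability has just been established, to factor $\int x^{k-n}y^n\,d(\mu\times\nu)=\left(\int x^{k-n}\,d\mu\right)\left(\int y^n\,d\nu\right)$. The one point demanding care is the degenerate case $\mu=0$ or $\nu=0$, where the ``only if'' direction as literally stated can fail; I would either assume both measures nonzero (as holds in the intended applications, where the scalar measures arise from a genuine positive measure and a semispectral measure) or simply note that the extraction of a good $x_0$ above already rests on $|\mu|\ne 0$.
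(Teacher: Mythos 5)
Your argument follows essentially the same route as the paper: part (a) is proved there too by viewing $\mu*\nu$ as the push-forward of $\mu\times\nu$ under addition and reducing to positive measures via the decomposition into the four positive parts $\nu_i^{\pm}$, and part (b) likewise uses $|\mu\times\nu|=|\mu|\times|\nu|$ with the binomial bound for the ``if'' direction and a Fubini-type slicing plus a reverse estimate of the form $|x|^k\leq K|x+y|^k+M$ for the ``only if'' direction. Your observation about the degenerate case $\mu=0$ or $\nu=0$ is a legitimate point of care that the paper's own proof glosses over (it tacitly selects a ``good'' slice whose existence requires the other measure to be nonzero), but apart from that the two proofs coincide.
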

\begin{proof} Let $\phi:\R^2\to \R$ denote addition, i.e.
$\phi(x,y)=x+y$.
Write the product measure $\mu\times \nu$ in terms of the positive
and negative parts of its real and imaginary parts:
$$
\mu\times \nu = \nu_1+i\nu_2=\nu_1^+-\nu_1^-+i(\nu_2^+-\nu_2^-),
$$
where $\nu_i^{\pm}=\frac 12 (|\nu_i|\pm \nu_i)$. Then
\begin{equation}\label{sum}
\mu*\nu(X) =
\nu_1^+(\phi^{-1}(X))-\nu_1^-(\phi^{-1}(X))+i(\nu_2^+(\phi^{-1}(X))-\nu_2^-(\phi^{-1}(X))),
\  \ X\in \h B(\R).
\end{equation}
Assume now that $f\circ\phi$ is $\mu\times \nu$-integrable. Then
$f\circ\phi$ is integrable with respect to each $\nu_i^{\pm}$, and
\begin{equation}\label{integral}
\int f\circ \phi\, d(\mu\times \nu) =\int f\circ\phi
\,d\nu_1^+-\int f\circ \phi \,d\nu_1^- +i(\int f\circ \phi
\,d\nu_2^+-\int f\circ \phi \,d\nu_2^-).
\end{equation}
Since the measures $\nu_i^{\pm}$ are positive, it follows that $f$
is integrable with respect to each induced measure $X\mapsto
\nu_1^+(\phi^{-1}(X))$, and the corresponding integrals are equal
(see e.g. \cite[p. 163]{Halmos}). Now \eqref{sum} and
\eqref{integral} imply that $f$ is $\mu*\nu$-integrable, with
$$
\int f d(\mu*\nu) = \int f\circ \phi \,d(\mu\times \nu).
$$
This proves (a).

To prove (b), suppose first that $(x,y)\mapsto \phi(x,y)^k$ is
$\mu\times \nu$-integrable. Since both measures $\mu$ and $\nu$
are finite, it follows from \cite[p. 193, Theorem 13]{Dunford} that
$x\mapsto \phi(x,y)^k=(x+y)^k$ is $\mu$-integrable for
$\nu$-almost all $y\in \R$. Take any such $y\in \R$. Now $x\mapsto
|x+y|^k$ is also $|\mu|$-integrable, where $|\mu|$ denotes the
total variation measure of $\mu$. There are positive constants $M$
and $K$ satisfying
$$
|x^k|\leq K|x+y|^k +M, \ \  x\in \R.
$$
This implies that $x\mapsto |x|^k$ is $|\mu$|-integrable, and
hence also $\mu$-integrable.   It is
  similarly seen that $x\mapsto |x|^k$ is
$\nu$-integrable.

Suppose now that $x\mapsto x^k$ is both $\mu$- and
$\nu$-integrable. Since $x\mapsto |x|^k$ is now $|\mu|$- and
$|\nu|$-integrable, and these are finite positive measures, it
follows that $x\mapsto |x|^l$ is $|\mu|$- and $|\nu|$-integrable
for all $l\in \N$, $l\leq k$. Hence, $(x,y)\mapsto |x^ly^m|$ is
$|\mu|\times |\nu|$-integrable for all $l,m\in \N$, $l\leq k$,
$m\leq k$. Since $|x+y|^k\leq \sum_{n=0}^k \binom{k}{n}
|x^{k-n}y^n|$, this implies that $(x,y)\mapsto
|\phi(x,y)^k|=|(x+y)^k|$ is $|\mu|\times |\nu|$-integrable. But
$|\mu|\times |\nu|=|\mu\times \nu|$ by \cite[p. 192, Lemma
11]{Dunford}, so $x\mapsto \phi(x,y)^k$ is $\mu\times
\nu$-integrable.

The claimed formula follows now easily, since we have shown above
that the equivalent integrability conditions imply that
$(x,y)\mapsto x^ly^m$ is $\mu\times \nu$-integrable for all
$l,m\in \N$, $l\leq k$, $m\leq k$.

\end{proof}

The converse implication in part (a) of the above lemma does not
hold if the measures $\mu$ and $\nu$ are not assumed to be
positive. This is the conclusion of the brief discussion we now
enter. Denote $\phi(x+y)=x+y$ as before, and
$\Sigma=\{\phi^{-1}(X)\,|\,X\in \hB(\R)\}.$ Then $\Sigma$ is a
$\sigma$-algebra (properly) contained in $\hB(\R^2)$.  Let $\mu$
and $\nu$ be complex Borel measures on
$\R$ and $\lambda_2$ their convolution. 
Lemma 8 in \cite[p. 182]{Dunford} states that the formula
$\lambda_1(\phi^{-1}(X))=\lambda_2(X)$ gives a well-defined
complex measure on $\Sigma$. (To be precise, the lemma requires
the additional assumption that $\phi$ be surjective, but of course
this holds in our situation.) The same lemma says that the total
variations satisfy $|\lambda_1|(\phi^{-1}(X))=|\lambda_2|(X)$ for
all $X\in \hB(\R)$, and moreover for any $\lambda_2$-integrable
Borel function $f:\R\to\C$, the composite function $f\circ \phi$
is $\lambda_1$-integrable, and the natural integral transformation
formula holds.

Obviously $\lambda_1$ is just the restriction of the product
measure $\mu\times\nu$  to $\Sigma$. Any $\Sigma$-measurable
function $g:\R^2\to\C$  which is $\mu\times\nu$-integrable, is
integrable with respect to the restriction of the variation
measure $|\mu\times\nu|$ to $\Sigma$. In the following example we
see that this need not be the case if $g$ is just assumed to be
$\lambda_1$-integrable. This phenomenon is at the root of the fact
that the implication in Lemma \ref{integrallemma} (a) cannot be
reversed.

\begin{example}\label{example1}
\rm   We construct two discrete measures $\mu$ and $\nu$
supported by $\Z$. Let $\sum_{k=0}^\infty a_k$ be any convergent
series with positive terms, and define $b_{2k}=b_{2k+1}=a_k$ for
all $k=0,\,1,\,2,\dots$, and $b_k=0$ if $k\in \Z$, $k\leq -1$. We
set $\mu(\{n\})=b_n$ for all $n\in\Z$. The discrete measure $\nu$
is defined by setting $\nu(\{n\})=(-1)^nb_{-n}$ for all $n\in \Z$.
Then the convolution $\lambda=\mu*\nu$ is supported by $\Z$, and
we have $\lambda(\{n\})=\sum_{j=-\infty}^\infty
b_j(-1)^{n-j}b_{j-n}=\sum_{j=0}^\infty b_j(-1)^{n-j}b_{j-n}$. If
$n$ is even, it follows that $\lambda(\{n\})=0$, since
$b_{2k}=b_{2k+1}$. However, $c_n=\sum_{j=-\infty}^\infty
|b_j(-1)^{n-j}b_{j-n}|>0$. We now define $f:\R\to\C$ by setting
$f(2k)=c_{2k}^{-1}$ for all $k\in Z$ and $f(x)=0$ if
$x\in\R\setminus 2\Z$. Then $\int_\R f(x)d\lambda(x)=0$, but the
function $(x,y)\mapsto f(x+y)$ is not $|\mu\times\nu|$-integrable,
since its integral with respect to $|\mu\times\nu|$ over any set
$\{(x,y)\,|\,x+y=n\}$, $n\in 2\Z$, equals 1.

\end{example}


To close this preliminary section, we recall the notion of an
operator integral in the sense of  \cite{LMY}. Let $\Omega$ be a
set and $\hA$ a $\sigma$-algebra of subsets of $\Omega$. Let
$E:\hA\to\lh$ be a semispectral measure (normalized positive
operator measure) taking values in $\lh$, the set of
bounded operators on a complex 
Hilbert space $\hil$ ($\ne \{0\}$). Thus, for any
$\vp,\psi\in\hil$, the set function $X\mapsto E_{\psi,\vp}(X):=
\langle\psi\,|\, E(X)\vp\rangle$ is a complex measure. For any
measurable function $f:\Omega\to\C$ we let $D(f,E)$ denote the set
of those vectors $\vp\in\hil$ for which $f$ is
$E_{\psi,\vp}$-integrable for all $\psi\in\hil$. The set $D(f,E)$
is a vector subspace of $\hil$ and the formula
$$
 \langle\psi\,|\, L(f,E)\vp\rangle =\int_\Omega f\,dE_{\psi,\vp}, \quad \vp\in D(f,E), \psi\in\hil,
$$
defines a unique linear operator $L(f,E)$, with the domain
$D(f,E)$. The set $\tilde D(f,E)=\{\vp\in\hil\,|\, \int
|f|^2\,dE_{\vp,\vp} <\infty \}$ is a subspace of $D(f,E)$, and we
let $\tilde L(f,E)$ denote the restriction of $L(f,E)$ into
$\tilde D(f,E)$. We recall that if $E$ is a spectral (projection
valued) measure, then $\tilde D(f,E)=D(f,E)$ and the operator
$L(f,E)$ is densely defined. We consider here only the cases where
$(\Omega,\hA)$ is $(\R,\br)$ or  $(\R^2,\hB(\R^2))$.

\section{Convolutions and their moment operators}\label{moments}

For any $X\in \h B(\R)$, let $\chi_X$ denote the characteristic
function of  $X$.  Recall that $\phi$ denotes the map
$(x,y)\mapsto x+y$. The function $\chi_X\circ \phi$ is bounded and
thereby integrable with respect to the product measure. Hence
Lemma~\ref{integrallemma}(a) and Fubini's theorem give that the
function
$$y\mapsto \mu(X-y)=\int\chi_{X-y}(x)\, d\mu(x)=\int\chi_X(x+y)\, d\mu(x)$$
coincides almost everywhere with a Borel function, and
$$
\mu*\nu(X) = \int\left(\int \chi_X(x+y)\, d\mu(x)\right)\, d\nu(y)
= \int_\R\mu(X-y)d\nu(y), \  \ X\in \h B(\R).
$$
Let now $E:\h B(\R)\to L(\hil)$ be a semispectral measure, and let
$\mu:\h B(\R)\to [0,1]$ be a probability measure. Since the
sesquilinear form
$$(\vp,\psi)\mapsto \int_\R \mu(X-y)\, dE_{\psi,\vp}(y)$$
is clearly bounded, one can define $\mu*E:\h B(\R)\to L(\hil)$ via
$\langle \vp|(\mu*E)(X)\psi\rangle := \mu*E_{\psi,\vp}(X)$,
$\vp,\psi\in \hil$. It follows from the monotone convergence
theorem that $\mu*E$ is a semispectral measure.

Denote $\mu[k]:=\int x^k\, d\mu(x)$, in case this integral exists
(i.e. when $\int |x^k|\, d\mu(x)<\infty$).

\begin{proposition} Let $E:\h B(\R)\to L(\hil)$ be a semispectral
measure, and $\mu:\h B(\R)\to [0,1]$ a probability measure. Then
\begin{itemize}
\item[(a)] $\tilde{D}(x^k,\mu*E)$ equals either $\tilde{D}(x^k,E)$ or
$\{0\}$, depending on whether $\mu[2k]$ exists or not. In the
former case,
$$\tilde{L}(x^k,\mu*E) = \sum_{n=0}^k \binom{k}{n} \mu[k-n]\tilde{L}(x^n,E).$$
\item[(b)] If $\mu[k]$ exists, then $D(x^k,E)\subset D(x^k,\mu*E)$, and
$$L(x^k,\mu*E) \supset \sum_{n=0}^k \binom{k}{n} \mu[k-n]L(x^n,E).$$
\end{itemize}
\end{proposition}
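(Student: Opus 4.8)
The plan is to exploit the basic identity $(\mu*E)_{\psi,\vp} = \mu*E_{\psi,\vp}$, valid for all $\vp,\psi\in\hil$ by the very definition of $\mu*E$, so that both parts reduce to the scalar statements of Lemma~\ref{integrallemma} applied to the complex measure $E_{\psi,\vp}$ in the second slot. For part (b) I would fix $\vp\in D(x^k,E)$ and an arbitrary $\psi$. Since $\mu[k]$ exists, $x\mapsto x^k$ is $\mu$-integrable, and since $\vp\in D(x^k,E)$ it is $E_{\psi,\vp}$-integrable; Lemma~\ref{integrallemma}(b) then says $(x,y)\mapsto (x+y)^k$ is $\mu\times E_{\psi,\vp}$-integrable, and Lemma~\ref{integrallemma}(a) lets me transform the integral, giving
$$\langle\psi\,|\,L(x^k,\mu*E)\vp\rangle = \int (x+y)^k\,d(\mu\times E_{\psi,\vp})(x,y) = \sum_{n=0}^k\binom{k}{n}\mu[k-n]\int y^n\,dE_{\psi,\vp}(y).$$
Recognizing the last integrals as $\langle\psi\,|\,L(x^n,E)\vp\rangle$ yields the asserted operator inclusion. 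The one point to check is that $\vp$ lies in every lower domain $D(x^n,E)$, $n\le k$; this follows from the elementary bound $|x|^n\le 1+|x|^k$ together with the finiteness of $|E_{\psi,\vp}|$, which also identifies the natural domain of the right-hand operator with $D(x^k,E)$.

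For part (a) I would pass to the positive measures $E_{\vp,\vp}$ and $(\mu*E)_{\vp,\vp}=\mu*E_{\vp,\vp}$, the latter being the image of the positive product measure $\mu\times E_{\vp,\vp}$ under $\phi$. The image-measure theorem for nonnegative functions then gives, in $[0,\infty]$,
$$\int |x|^{2k}\,d(\mu*E)_{\vp,\vp}(x) = \int\! \Big(\int (x+y)^{2k}\,d\mu(x)\Big)\,dE_{\vp,\vp}(y) =: \int F(y)\,dE_{\vp,\vp}(y),$$
where I have used $|x|^{2k}=x^{2k}$ and Tonelli. I would treat the divergent case first: if $\mu[2k]$ does not exist, the convexity bound $|x|^{2k}\le 2^{2k-1}(|x+y|^{2k}+|y|^{2k})$ forces $F(y)=\infty$ for every $y$, so the outer integral is infinite for each $\vp\ne 0$, giving $\tilde{D}(x^k,\mu*E)=\{0\}$. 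If $\mu[2k]$ exists, then so do all $\mu[j]$ with $j\le 2k$, and $F(y)=\sum_{m=0}^{2k}\binom{2k}{m}\mu[2k-m]y^m$ is a genuine polynomial of degree $2k$ with leading coefficient $1$; a two-sided comparison $\tfrac12 y^{2k}\le F(y)$ for large $|y|$ and $F(y)\le C(1+y^{2k})$ everywhere shows that $\int F\,dE_{\vp,\vp}<\infty$ if and only if $\int y^{2k}\,dE_{\vp,\vp}<\infty$, i.e. $\tilde{D}(x^k,\mu*E)=\tilde{D}(x^k,E)$.

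Finally, the operator formula in (a) follows by restriction: since $\mu[2k]$ existing implies $\mu[k]$ exists, part (b) already gives $L(x^k,\mu*E)\vp=\sum_{n=0}^k\binom{k}{n}\mu[k-n]L(x^n,E)\vp$ on $D(x^k,E)\supset\tilde{D}(x^k,E)$; on $\tilde{D}(x^k,E)$ each $L(x^n,E)$ agrees with $\tilde{L}(x^n,E)$, and the domains again match because $\tilde{D}(x^k,E)\subset\tilde{D}(x^n,E)$ for $n\le k$. I expect the main obstacle to be the domain bookkeeping in part (a): one must justify the change of variables in the $[0,\infty]$-valued (Tonelli) form rather than through the integrable version of Lemma~\ref{integrallemma}(a), and then extract the sharp equivalence of square-integrability conditions from the polynomial $F$, keeping in mind that the dichotomy between $\tilde{D}(x^k,E)$ and $\{0\}$ is governed precisely by the \emph{even} moment $\mu[2k]$ and not by $\mu[k]$.
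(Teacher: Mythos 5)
Your proof is correct and follows essentially the same route as the paper's: both parts reduce to the scalar identity $(\mu*E)_{\psi,\vp}=\mu*E_{\psi,\vp}$ together with Lemma~\ref{integrallemma}, and part (b) is argued identically. The only cosmetic difference is in the domain dichotomy of part (a), where the paper invokes the positive-measure change of variables and then cites Lemma~\ref{integrallemma}(b) for $(x+y)^{2k}$ and $\mu\times E_{\vp,\vp}$, whereas you re-derive that same equivalence inline via a Tonelli computation with $F(y)=\int (x+y)^{2k}\,d\mu(x)$; both are sound.
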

\begin{proof} Since $(\mu*E)_{\vp,\vp}= \mu*E_{\vp,\vp}$ by definition,
and these measures are positive, it follows from e.g. \cite[p.
163]{Halmos} that $x^{2k}$ is $(\mu*E)_{\vp,\vp}$-integrable if
and only if $(x,y)\mapsto (x+y)^{2k}$ is $\mu\times
E_{\vp,\vp}$-integrable. By Lemma \ref{integrallemma} (b), this
happens exactly when $\mu[2k]$ exists and $\vp\in
\tilde{D}(x^k,E)$. Hence, $\tilde{D}(x^k,\mu*E)$ equals either
$\tilde{D}(x^k,E)$ or $\{0\}$, depending on whether $\mu[2k]$
exists or not. Suppose now that $\mu[2k]$ exists, and let $\vp\in
\tilde{D}(x^k,\mu*E)=\tilde{D}(x^k,E)$. Since this set is
contained in $D(x^k,E)$, it follows that $x^k$ is
$E_{\psi,\vp}$-integrable for all $\psi\in \hil$. Also, $\mu[k]$
clearly exists. Hence, according to Lemma \ref{integrallemma} (b),
$(x,y)\mapsto (x+y)^k$ is $\mu\times E_{\psi,\vp}$-integrable for
all $\psi\in \hil$, so using both (a) and (b) of that lemma, we
get
\begin{equation}\label{binomialform}
\int x^k \, d(\mu*E)_{\psi,\vp} = \int (x+y)^k\, d(\mu\times
E_{\psi,\vp})(x,y) = \sum_{n=0}^k \binom{k}{n} \mu[k-n]\int x^n \,
dE_{\psi,\vp}, \  \ \psi\in \hil.
\end{equation}
This completes the proof of (a). To prove (b), suppose that
$\mu[k]$ exists, so that $x^k$ is $\mu$-integrable. Now if $\vp\in
D(x^k,E)$, then $x^k$ is also $E_{\psi,\vp}$-integrable for any
$\psi\in \hil$. According to Lemma \ref{integrallemma} (b), this
implies that $(x,y)\mapsto (x+y)^k$ is $\mu\times
E_{\psi,\vp}$-integrable for all $\psi\in \hil$, and using again
also Lemma \ref{integrallemma} (a), we see that $x^k$ (and thus
also $x^n$ with $n\leq k$) is $\mu*E_{\psi,\vp}$-integrable (i.e.
$(\mu*E)_{\psi,\vp}$-integrable) for all $\psi\in \hil$, and the
relation \eqref{binomialform} holds. But this means that we have
proved (b).
\end{proof}


\begin{proposition}\label{domainprop} Let $E:\h B(\R)\to L(\hil)$ be a spectral measure,
let $k\in \N$, and let $\mu:\h B(\R)\to [0,1]$ be a probability
measure such that $\mu[k]$ exists. Denote $A=L(x,E)$. Then
$$L(x^k,\mu*E)= \sum_{n=0}^k \binom{k}{n} \mu[k-n]A^n, \ \ D(x^k,\mu*E)=D(A^k).$$
Moreover, $\tilde{D}(x^k,\mu*E)$ equals either
$D(A^k)=D(x^k,\mu*E)$ or $\{0\}$, depending on whether $\mu[2k]$
exists or not.
\end{proposition}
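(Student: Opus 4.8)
The plan is to recognize the right-hand side as a polynomial of $A$ and then replace the delicate integrability bookkeeping by a maximality argument. Write $p(x)=\sum_{n=0}^k \binom{k}{n}\mu[k-n]x^n$. Since $\mu[k]$ exists and $\mu$ is a finite positive measure, all lower moments $\mu[j]$, $j\le k$, exist, so $p$ is a genuine polynomial of degree $k$ with real coefficients and leading coefficient $\mu[0]=1$; note also $p(x)=\int (x+y)^k\,d\mu(y)$. Because $E$ is spectral, the functional calculus gives $L(x^n,E)=A^n$ with $D(x^n,E)=D(A^n)$, so the preceding proposition, part (b), already yields the inclusion $L(x^k,\mu*E)\supset \sum_{n=0}^k\binom{k}{n}\mu[k-n]A^n=p(A)$, the last operator having domain $D(A^k)$ since $\deg p=k$ and $D(A^k)\subset D(A^n)$ for $n\le k$. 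What remains is the reverse inclusion $D(x^k,\mu*E)\subset D(A^k)$.

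The key point is that both operators have rigid structure. On the one hand, $p(A)=L(p,E)$ is a real polynomial in the self-adjoint operator $A$, hence itself self-adjoint, with $D(p(A))=\{\vp\mid \int x^{2k}\,dE_{\vp,\vp}<\infty\}=D(A^k)$. On the other hand, $x^k$ is real-valued and $\mu*E$ is a positive operator measure, so each $(\mu*E)(X)$ is self-adjoint and $\overline{(\mu*E)_{\vp,\psi}(X)}=(\mu*E)_{\psi,\vp}(X)$; consequently, for $\vp,\psi\in D(x^k,\mu*E)$ one has $\langle L(x^k,\mu*E)\psi\,|\,\vp\rangle=\int x^k\,d(\mu*E)_{\psi,\vp}=\langle \psi\,|\,L(x^k,\mu*E)\vp\rangle$, so that $L(x^k,\mu*E)$ is symmetric. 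Since $D(x^k,\mu*E)\supset D(A^k)$ is dense, $L(x^k,\mu*E)$ is a densely defined symmetric extension of the self-adjoint operator $p(A)$. A self-adjoint operator is maximal symmetric, so taking adjoints in the chain $p(A)\subset L(x^k,\mu*E)\subset L(x^k,\mu*E)^*\subset p(A)^*=p(A)$ forces equality throughout, giving $L(x^k,\mu*E)=p(A)$ and $D(x^k,\mu*E)=D(A^k)$, as claimed. I stress that this argument uses $\mu[k]$ only, so it is valid regardless of whether $\mu[2k]$ exists.

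Finally, the assertion about $\tilde D(x^k,\mu*E)$ follows by combining the preceding proposition, part (a), with the spectral identity $\tilde D(x^k,E)=D(x^k,E)=D(A^k)$: if $\mu[2k]$ exists then $\tilde D(x^k,\mu*E)=\tilde D(x^k,E)=D(A^k)=D(x^k,\mu*E)$, and otherwise $\tilde D(x^k,\mu*E)=\{0\}$ while $D(x^k,\mu*E)=D(A^k)$ still holds. I expect the main obstacle to be precisely the reverse domain inclusion $D(x^k,\mu*E)\subset D(A^k)$: a direct integrability estimate is blocked because the off-diagonal measures $E_{\psi,\vp}$ are not positive and the converse of Lemma \ref{integrallemma}(a) fails (Example \ref{example1}), so one cannot simply deconvolve $\mu$ away. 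The self-adjointness–maximality argument circumvents this difficulty, and it relies essentially on $E$ being a spectral, not merely semispectral, measure, since that is exactly what makes $p(A)$ self-adjoint.
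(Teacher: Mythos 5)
Your proposal is correct and follows essentially the same route as the paper: establish the inclusion $L(x^k,\mu*E)\supset\sum_{n=0}^k\binom{k}{n}\mu[k-n]A^n$ from the preceding proposition, observe that the right-hand side is selfadjoint while the left-hand side is symmetric, and conclude equality from the maximal symmetry of selfadjoint operators, with the square-integrability claim following from part (a). The paper states this more tersely, but your explicit verification of the symmetry of $L(x^k,\mu*E)$ and of $D(p(A))=D(A^k)$ via the unit leading coefficient fills in exactly the steps the paper leaves implicit.
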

\begin{proof} Since $E$ is a spectral measure, $A$ is selfadjoint, and
$D(A^k) = D(x^k,E) = \tilde{D}(x^k,E)$, $L(x^k,E)=A^k$ for all
$k\in \N$. According to the preceding proposition (b),
$L(x^k,\mu*E)$ is a symmetric extension of the selfadjoint
operator $\sum_{n=0}^k \binom{k}{n} \mu[k-n]A^n$. Thus these
operators must be equal. The last claim follows immediately from
part (a) of the preceding proposition.
\end{proof}

\begin{remark}\rm
Let $E:B(\R)\to L(\hil)$ be any spectral measure, and choose a
probability measure $\mu$ such that $\mu[k]$ exists but $\mu[2k]$
does not. Then $L(x^k,\mu*E)$ is a densely defined selfadjoint
operator, but $\tilde{D}(x^k,\mu*E)=\{0\}$.
\end{remark}

Consider then the following special case. For any positive
operator $T$ of trace one, and a selfadjoint operator $A$ in
$\hil$, let $p_T^A:\h B(\R)\to [0,1]$ be the probability measure
defined by $p_T^A(X) = \tr[TE^{A}(X)]$, where $E^A$ is the
spectral measure of $A$.

Let $A$ be a selfadjoint operator and $k\in \N$, such that
$p_T^A[k]$ exists. According to e.g. \cite[Lemma 1]{KLY2005} and
\cite[Lemma 1]{KLY2006}, this happens exactly when
$\sqrt{|A|}^k\sqrt{T}$ is a Hilbert-Schmidt operator. Under this
condition, we then have, according to the preceding proposition,
that
$$
L(x^k,p_T^{A}*E^B)=\sum_{n=0}^k \binom{k}{n} p_T^A[k-n]B^n, \ \
D(x^k,p_T^A*E^B)=D(B^k)
$$
for any selfadjoint operator $B$. Moreover,
$\tilde{D}(x^k,p_T^{A}*E^B)\neq \{0\}$ if and only if $p_T^A[2k]$
exists, or, equivalently, $A^k\sqrt{T}$ is a Hilbert-Schmidt
operator. This stronger condition assures also that
$p_T^A[k-n]=\tr[A^{k-n}T]$ in the above formula, the operators
$A^{k-n}T$ being in the trace class.

\begin{remark}\rm As an example, take
$T=|\eta\rangle\langle \eta|$ with $\eta\in D(\sqrt{|A|})$ but
$\eta\notin D(A)$. Then
$L(x,p_{|\eta\rangle\langle\eta|}^{A}*E^B)=B$, since $\sqrt{|A|}
\sqrt{|\eta\rangle \langle \eta|} = \sqrt{|A|}|\eta\rangle \langle
\eta|$ is clearly a Hilbert-Schmidt operator. However, the square
integrability domain is $\{0\}$, since
$|A|\sqrt{|\eta\rangle\langle \eta|}$ is quite far from being
Hilbert-Schmidt (its domain is $\{0\}$). Note also that now
$p_{|\eta\rangle\langle \eta|}^A[1]$ is not equal to
$\tr[A|\eta\rangle\langle \eta|]$, since this trace is not even
defined.
\end{remark}

\section{Phase space observables}\label{phase_space}
Let $\hil= L^2(\R)$, and let $Q$ and $P$ be the selfadjoint
position and momentum operators
 in $\hil$, and  $W(q,p)$,
$(q,p)\in\R^2$, the corresponding Weyl operators. Consider now the
phase space observable $E^T:\h B(\R^2)\to L(\hil)$,
$$
E^T(Z)=\frac{1}{2\pi}\int_Z W(q,p)TW(q,p)^*\, dqdp,
$$
with $T$ a positive operator of trace one.  The Cartesian marginal
measures $E^{T,x},E^{T,y}:\h B(\R)\to L(\hil)$ are defined by
$E^{T,x}(X):=E^T(X\times \R)$, $E^{T,y}(Y):=E^T(\R\times Y)$. It
is well known that they are equal to $p_T^{-Q}*E^Q$ and
$p_T^{-P}*E^P$, respectively, see  e.g.  \cite[Theorem
3.4.2]{Davies76}. According to the above discussion, we can thus
determine the $k$th moment operators of the $x$- and $y$- margins,
under the respective conditions that $p_T^{-Q}[k]$ and
$p_T^{-P}[k]$ exist, or, equivalently, $\sqrt{|Q|}^k\sqrt{T}$ and
$\sqrt{|P|}^k\sqrt{T}$ are Hilbert-Schmidt:

\begin{proposition} Let $k\in \N$.
\begin{itemize}
\item[(a)] If $\sqrt{|Q|}^k\sqrt{T}$ is a Hilbert-Schmidt operator, then
$$
L(x^k,E^{T,x}) = \sum_{n=0}^k \binom{k}{n}
(-1)^{k-n}p_T^{Q}[k-n]Q^n, \ \ D(x^k,E^{T,x})=D(Q^k).
$$
\item[(b)] Part (a) holds also when ''$x$'' and ''$Q$'' are replaced by
''$y$'' and ''$P$''.
\end{itemize}
\end{proposition}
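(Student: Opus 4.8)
The plan is to recognize the marginal $E^{T,x}$ as a convolution and reduce the claim to Proposition \ref{domainprop}, so that the only genuinely new work is a change-of-variables computation relating the moments of $p_T^{-Q}$ to those of $p_T^{Q}$. As recalled just before the statement, one has $E^{T,x}=p_T^{-Q}*E^{Q}$, where $E^{Q}$ is the spectral measure of the selfadjoint position operator $Q$, so that $L(x,E^{Q})=Q$. Hence $E^{T,x}$ is exactly of the form $\mu*E$ treated in Proposition \ref{domainprop}, with $\mu=p_T^{-Q}$ and $E=E^{Q}$.

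First I would verify that the stated hypothesis is precisely the integrability condition required by that proposition. Since $|-Q|=|Q|$, the operator $\sqrt{|-Q|}^{\,k}\sqrt{T}$ coincides with $\sqrt{|Q|}^{\,k}\sqrt{T}$, so by the discussion following the preceding remark (via \cite[Lemma 1]{KLY2005}, \cite[Lemma 1]{KLY2006}) the assumption that $\sqrt{|Q|}^{\,k}\sqrt{T}$ is Hilbert--Schmidt is equivalent to the existence of $p_T^{-Q}[k]$. Because $p_T^{-Q}$ is a probability measure, existence of its $k$th moment also secures the existence of all lower moments $p_T^{-Q}[k-n]$, $0\le n\le k$. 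Proposition \ref{domainprop} then applies directly with $A=Q$ and yields
\begin{equation*}
L(x^k,E^{T,x})=\sum_{n=0}^{k}\binom{k}{n}\,p_T^{-Q}[k-n]\,Q^{n},
\qquad D(x^k,E^{T,x})=D(Q^{k}).
\end{equation*}

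It remains to rewrite the coefficients in terms of $p_T^{Q}$. The key identity is the moment relation $p_T^{-Q}[m]=(-1)^{m}p_T^{Q}[m]$, which I would obtain from the fact that the spectral measures of $Q$ and $-Q$ are related by $E^{-Q}(X)=E^{Q}(-X)$; taking traces against $T$ gives $p_T^{-Q}(X)=p_T^{Q}(-X)$, and the substitution $x\mapsto -x$ in $\int x^{m}\,dp_T^{-Q}(x)$ produces the factor $(-1)^{m}$. Inserting this with $m=k-n$ into the displayed formula yields exactly the claimed expression
\begin{equation*}
L(x^k,E^{T,x})=\sum_{n=0}^{k}\binom{k}{n}(-1)^{k-n}p_T^{Q}[k-n]\,Q^{n}.
\end{equation*}

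There is no serious obstacle here, as the statement is essentially a corollary of Proposition \ref{domainprop}; the only point requiring care is the bookkeeping of the sign in the change of variables, together with the routine observation that $|-Q|=|Q|$ makes the hypothesis symmetric in $Q$ and $-Q$. For part (b) I would run the identical argument using instead $E^{T,y}=p_T^{-P}*E^{P}$ and $L(x,E^{P})=P$, with $P$ in place of $Q$ throughout; no step depends on any property distinguishing $Q$ from $P$, so the conclusion follows verbatim with ''$x$'' and ''$Q$'' replaced by ''$y$'' and ''$P$''.
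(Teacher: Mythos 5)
Your proposal is correct and follows essentially the same route as the paper, which derives the proposition directly from Proposition \ref{domainprop} (in its specialization to $\mu=p_T^{A}$ with $A=-Q$, $E=E^{Q}$), using $|-Q|=|Q|$ to identify the Hilbert--Schmidt hypothesis with the existence of $p_T^{-Q}[k]$ and the relation $p_T^{-Q}[m]=(-1)^{m}p_T^{Q}[m]$ to produce the signs. Your write-up merely makes explicit the sign bookkeeping that the paper leaves to the reader.
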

Under the square integrability condition that $Q^k\sqrt{T}$
(respectively $P^k\sqrt{T}$) be Hilbert-Schmidt, we get
$p_T^Q[k-n]= \tr[Q^{k-n}T]$ ($p_T^P[k-n]= \tr[P^{k-n}T]$).

\begin{remark}\rm According to the discussion in the preceding
remark, a simple example where $L(x,E^{T,x})=Q$ but
$\tilde{D}(x,E^{T,x})=\{0\}$, is obtained by taking
$T=|\eta\rangle \langle \eta|$, where $\eta\in \hil$
is a unit vector with $\int |x||\eta(x)|^2\, dx<\infty$, $\int x
|\eta(x)|^2\, dx = 0$, and $\int x^2|\eta(x)|^2\, dx=\infty$.
\end{remark}

An additional problem with the moment operators $L(x^k,E^{T,x})$
and $L(x^k,E^{T,y})$ is their connection to the operators
$L(x^k,E^T)$ and $L(y^k,E^T)$, which we have considered before
(see \cite{KLY2005,KLY2006}). By writing e.g.
$E^{T,x}(X)=E^T(\pi_1^{-1}(X))$ where $\pi_1:\R^2\to \R$ is the
coordinate projection $(x,y)\mapsto x$, we notice that a similar
''change of variables'' argument as that in Lemma
\ref{integrallemma} gives $L(x^k, E^{T,x})\supset L(x^k,E^T)$.
 Now if $\sqrt{|Q|}^k\sqrt{T}$ is
Hilbert-Schmidt, then we know from the above proposition that
$L(x^k, E^{T,x})$ is a selfadjoint operator, a polynomial in $Q$.
However, this does not determine $L(x^k,E^T)$; we can only say
that it has $L(x^k, E^{T,x})$ as a selfadjoint extension.

Consider then the square integrability domains. Since the measures
involved are now positive, the ''change of variables formula''
(see e.g. \cite[p. 163]{Halmos}) can be used to conclude that the
restrictions are equal: $\tilde{L}(x^k, E^{T,x})=
\tilde{L}(x^k,E^T)$. According to Proposition \ref{domainprop},
this operator is nontrivial exactly when $|Q|^k\sqrt{T}$ is
Hilbert-Schmidt, in which case it is selfadjoint. This stronger
condition then forces both the symmetric extensions $L(x^k,
E^{T,x})$ and $L(x^k,E^T)$ to coincide with the restriction, and
we recover Theorem 4 of \cite{KLY2006}.

\

\noindent {\bf Acknowledgment.} One of us (J.K.) was supported by the Emil Aaltonen Foundation and the Finnish Cultural Foundation.

\end{document}